\begin{document}

\newtheorem{theorem}{Theorem}[section]
\newtheorem{definition}[theorem]{Definition}
\newtheorem{lemma}[theorem]{Lemma}
\newtheorem{remark}[theorem]{Remark}
\newtheorem{proposition}[theorem]{Proposition}
\newtheorem{corollary}[theorem]{Corollary}
\newtheorem{example}[theorem]{Example}

\numberwithin{equation}{section}
\newcommand{\ep}{\varepsilon}
\newcommand{\R}{{\mathbb  R}}
\newcommand\C{{\mathbb  C}}
\newcommand\Q{{\mathbb Q}}
\newcommand\Z{{\mathbb Z}}
\newcommand{\N}{{\mathbb N}}

\title[On a class of quadratic Hamiltonian systems]{On a class of three dimensional quadratic Hamiltonian systems}

\author{R\u{a}zvan M. Tudoran}

\address{The West University of Timi\c soara,\\
Faculty of Mathematics and Computer Science, \\
Department of Mathematics,\\
Blvd. Vasile P\^ arvan, No. 4,\\
300223 - Timi\c soara,\\
Romania.}

\email{tudoran@math.uvt.ro}

\subjclass{70H05; 70E15; 70E40.}

\keywords{Hamiltonian dynamics; quadratic Hamiltonian systems; normal forms.}

\date{February 22, 2011}
\dedicatory{}

\begin{abstract}
The purpose of this article is to compute the normal form of a class of general quadratic Hamiltonian systems that generalizes naturally Euler's equations from the free rigid body dynamics.
\end{abstract}

\maketitle

\section{Introduction}
\label{section:one}

When thinking about quadratic and homogeneous Hamiltonian systems on the dual of a Lie algebra, the first example that comes to our mind is the system describing the rotations of a free rigid body around its center of mass. This system was derived by Euler in 1758 (see e.g. \cite{euler}) then generalized by Poincar\'e (see \cite{poincare}), and then again later by Arnold (see e.g. \cite{arnoldcarte}) starting from the original $\mathfrak{so}(3)$ Lie algebra, to a general Lie algebra. The mathematical literature contains a huge 
amount of writings concerning Euler's equations, from their original form to the most general forms (see e.g. \cite{euler}, \cite{kowalewski}, \cite{abraham}, \cite{bogo1}, \cite{bogo2}, \cite{holm1}, \cite{holm2}, \cite{holm3}, \cite{ratiu},\cite{ec}, \cite{marsdenratiu}, \cite{rotorpendul}, \cite{tarama}). 

Recently, in \cite{tudoran} it is proved that under suitable non-degeneracy conditions, a quadratic and homogeneous (quadratic) Hamiltonian system defined on the dual of the Lie algebra $\mathfrak{o}(K)$ of real $K$ - skew-symmetric matrices, where $K$ is an arbitrary $3\times 3$ real symmetric matrix, is affinely equivalent to Euler's equations (Euler's equations with three linear controls) describing the free rigid body dynamics. More exactly, in both cases above, the non-degeneracy condition requires the existence of a positive definite linear combination between the quadratic form generated by $K$ and respectively the quadratic form that generates the Hamiltonian of the system. In \cite{tudoran} it is introduced a general family of quadratic Hamiltonian systems that generalizes the quadratic Hamiltonian systems defined on the dual of the Lie algebra $\mathfrak{o}(K)$, and consequently gives also a generalization of Euler's equations describing the free rigid body dynamics (with linear controls). 

The purpose of this paper is to construct explicitly the normal form of an interesting subset of quadratic Hamiltonian system from this general family, that generalize Euler's equations (with linear controls).  

For details on Poisson geometry and Hamiltonian dynamics, see, e.g. \cite{abraham}, \cite{arnoldcarte}, \cite{marsdenratiu}, \cite{holm1}, \cite{holm2}, \cite{holm3}, \cite{ratiurazvan}.

\section{Normal form of a large class of three dimensional quadratic Hamiltonian systems}

In this section we give an explicit method to compute the normal form of a general quadratic Hamiltonian system on a natural extension of $(\mathfrak{o}(K))^*$, where $K\in Sym(3)$ is a given $3\times 3$ real symmetric matrix.

Recall that the Lie algebra $\mathfrak{o}(K)$ is defined as $\mathfrak{o}(K):=\{A\in \mathfrak{gl}(3;\R):\ A^{T}K+KA=O_3\}$. Note that in the case when $K$ is nonsingular, $\mathfrak{o}(K)$ is the Lie algebra of the Lie group $O(K):=\{A\in GL(3;\R):\ A^{T}KA=K \}$ of $K$-orthogonal $3\times 3$ matrices. 

Let us now recall a Lie algebra isomorphism between $\mathfrak{o}(K)$ and $\R^3$. For details regarding this isomorphism, see e.g. \cite{holm1}, \cite{marsdenratiu}.
\begin{proposition}\label{pr.2.1}
The Lie algebras $\left(\mathfrak{o}(K),+,\cdot_{\R},[\cdot,\cdot]\right)$ and $\left(\R^3,+,\cdot_{\R},\times_{K}\right)$ are isomorphic, where $[\cdot,\cdot]$ is the commutator of matrices, and respectively $u\times_{K}v:=K(u\times v)$, for any $u,v\in\mathbb{R}^{3}$.
\end{proposition}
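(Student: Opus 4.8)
The plan is to exhibit an explicit linear isomorphism $\phi\colon\R^{3}\to\mathfrak{o}(K)$ that intertwines the bracket $\times_{K}$ with the matrix commutator, built from the classical hat map $u\mapsto\widehat{u}$ (the unique linear map $\R^{3}\to\mathfrak{so}(3)$ with $\widehat{u}\,w=u\times w$ for all $w\in\R^{3}$, satisfying $[\widehat{u},\widehat{v}]=\widehat{u\times v}$). My candidate is
\[
\phi(u):=\widehat{u}\,K,\qquad u\in\R^{3},
\]
where $K$ is taken nonsingular, this being exactly the regime in which $\mathfrak{o}(K)$ has dimension $3$ (cf.\ the remark preceding the Proposition; for singular $K$ one has $\dim\mathfrak{o}(K)>3$, so the statement should be read with $K$ invertible).

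I would first verify that $\phi$ is a linear isomorphism onto $\mathfrak{o}(K)$. It lands in $\mathfrak{o}(K)$ because $\widehat{u}^{T}=-\widehat{u}$ and $K^{T}=K$ give $(\widehat{u}K)^{T}K+K(\widehat{u}K)=-K\widehat{u}K+K\widehat{u}K=O_{3}$; linearity is obvious. For bijectivity, note that $A\mapsto KA$ is an automorphism of $\mathfrak{gl}(3;\R)$ carrying $\mathfrak{o}(K)$ onto $\mathfrak{so}(3)$ (since $A^{T}K+KA=O_{3}$ is equivalent to $KA\in\mathfrak{so}(3)$), so $\dim_{\R}\mathfrak{o}(K)=3$; and $\phi$ is injective because $K$ is invertible and the hat map is injective. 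Hence $\phi$ is bijective.

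The crux is the morphism property $\phi(u\times_{K}v)=[\phi(u),\phi(v)]$. Writing it out, $[\phi(u),\phi(v)]=\widehat{u}K\widehat{v}K-\widehat{v}K\widehat{u}K=\bigl(\widehat{u}K\widehat{v}-\widehat{v}K\widehat{u}\bigr)K$, while $\phi(u\times_{K}v)=\phi\bigl(K(u\times v)\bigr)=\widehat{K(u\times v)}\,K$, so the claim reduces to the identity
\[
\widehat{u}\,K\,\widehat{v}-\widehat{v}\,K\,\widehat{u}=\widehat{K(u\times v)},\qquad K\in Sym(3),\ u,v\in\R^{3}.
\]
Both sides are bilinear and antisymmetric in $(u,v)$, so it is enough to check it for $(u,v)=(e_{i},e_{j})$ with $i<j$, a short $3\times 3$ computation; alternatively it follows from the covariance identity $A\widehat{w}A^{T}=\widehat{\mathrm{cof}(A)\,w}$ (valid for every $3\times3$ matrix $A$) applied to $A=K$, combined with $\mathrm{cof}(K)=(\det K)K^{-1}$ and $[\widehat{a},\widehat{b}]=\widehat{a\times b}$.

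The only real obstacle I anticipate is choosing the correct intertwiner and proving that identity: for instance $u\mapsto K^{-1}\widehat{u}$ also maps $\R^{3}$ bijectively onto $\mathfrak{o}(K)$ but fails the morphism property for the specific bracket $u\times_{K}v=K(u\times v)$, so the exact form $\phi(u)=\widehat{u}K$ is essential. Once the three properties are in place, $\phi$ is a Lie algebra isomorphism, hence $(\R^{3},+,\cdot_{\R},\times_{K})\cong(\mathfrak{o}(K),+,\cdot_{\R},[\cdot,\cdot])$, which is the Proposition.
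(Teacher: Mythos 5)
Your proof is correct, and your intertwiner $\phi(u)=\widehat{u}\,K$ is precisely the map the paper itself indicates (its relation $Su=\mathbf{s}\times Ku$ for all $u$ is exactly $S=\widehat{\mathbf{s}}\,K$); the paper gives no proof of the Proposition, only a reference, so you have simply supplied the verification it omits, including the key identity $\widehat{u}K\widehat{v}-\widehat{v}K\widehat{u}=\widehat{K(u\times v)}$ for symmetric $K$, which does check out. Your observation that $K$ must be nonsingular for the statement to hold as written (otherwise $\dim\mathfrak{o}(K)>3$) is also accurate and consistent with the paper's remark following the Proposition.
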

Note that for $K$ nonsingular, the above isomorphism between $S\in\mathfrak{o}(K)$ and $\bold{s}\in\mathbb{R}^3$, can be defined by using the equation $Su=\bold{s}\times Ku$, $u\in\mathbb{R}^3$.

An immediate consequence of the this proposition is that $(\mathfrak{o}(K))^*\cong(\R^3)^*\cong$ $\R^3$, viewed as a dual of a Lie algebra, it has a natural Poisson structure, namely the "minus" Lie-Poisson structure, which in this case proves to be generated by the Poisson bracket:
$$\{f,g\}_{K}:=-\nabla C_{K}\cdot(\nabla f\times\nabla g),$$ 
for any $f,g\in C^\infty(\R^3,\R)$, where the smooth function $C_{K}\in C^\infty(\R^3,\R)$ is given by
$$C_{K}(u):=\dfrac{1}{2}u^{T}Ku.$$
For more details regarding this bracket and the associated Lie-Poisson dynamics, see e.g. \cite{holm1}.  

It is not hard to see that the center of the Poisson algebra $C^\infty(\R^3,\R)$ is generated by the Casimir invariant function
$C_{K}\in C^\infty(\R^3,\R)$, $C_{K}(u)=\dfrac{1}{2}u^TKu$.

Let us now generalize the above setting by introducing the Poisson manifold $(\mathbb{R}^{3},\{\cdot,\cdot\}_{(K,\bold{k})})$, where the Poisson bracket $\{\cdot,\cdot\}_{(K,\bold{k})}$ is defined by:
$$\{f,g\}_{(K,\bold{k})}:=-\nabla C_{(K,\bold{k})}\cdot(\nabla f\times\nabla g),$$ 
for any $f,g\in C^\infty(\R^3,\R)$, and the smooth function $C_{(K,\bold{k})}\in C^\infty(\R^3,\R)$ is given by
$$C_{(K,\bold{k})}(u):=\dfrac{1}{2}u^{T}Ku+u^{T}\bold{k}.$$
Using the above definition of the Poisson bracket, it follows that the smooth function $C_{(K,\bold{k})}$ is a Casimir invariant function.

Consequently, a quadratic Hamiltonian system on $(\mathbb{R}^{3},\{\cdot,\cdot\}_{(K,\bold{k})})$, is generated by a smooth function $H_{(A,\bold{a})}\in C^\infty(\R^3,\R)$, given by $$H_{(A,\bold{a})}(u):=\dfrac{1}{2}u^{T}Au+u^{T}\bold{a},$$
where $A\in Sym(3)$ is an arbitrary real symmetric matrix, and $\bold{a}\in\mathbb{R}^3$. 

Hence, the associated Hamiltonian system is governed by the following differential equation:
\begin{equation}\label{general}
\dot u=(Ku+\bold{k})\times(Au+\bold{a}),\ u\in\R^3.
\end{equation}

The main result in \cite{tudoran} states that if there exists $\alpha,\beta\in\R$ such that $\alpha A + \beta K$ is positive (or negative) definite, then the system \eqref{general} is affinely equivalent to Euler's equations of the free rigid body dynamics with linear controls, namely:
\begin{equation}\label{eulercontrol}
\dot u=u \times(Du+\bold{d}),\ u\in\R^3,
\end{equation}
where $D=\operatorname{diag}(\lambda_1,\lambda_2,\lambda_3)$ is a real diagonal $3\times 3$ matrix, and $\bold{d}\in\R^3$. In the case when $\bold{d}=0$, the equations \eqref{eulercontrol} are exactly the Euler's equations of the free rigid body dynamics.
Note that in coordinates, the system \eqref{eulercontrol} become:
$$\left\{\begin{array}{l}
\dot x_{1}=(\lambda_3-\lambda_2)x_{2}x_{3}+d_{3}x_{2}-d_{2}x_{3},\\
\dot x_{2}=(\lambda_1-\lambda_3)x_{1}x_{3}-d_{3}x_{1}+d_{1}x_{3},\\
\dot x_{3}=(\lambda_2-\lambda_1)x_{1}x_{2}+d_{2}x_{1}-d_{1}x_{2},\\
\end{array}\right.
$$
where $(d_1,d_2,d_3)$ are the coordinates of $\bold{d}$.

The purpose of this paper is to give a normal form of the equations \eqref{general} in the case when the matrices $A$ and $K$ commutes. Let us now state the main result of this paper.

\begin{theorem}\label{mr}
If the matrices $A$ and $K$ commutes, then the system \eqref{general} is orthogonally equivalent to
the dynamical system:
\begin{equation}\label{sok}
\dot v=(D_{K}v+\hat{\bold{k}})\times(D_{A}v+\hat{\bold{a}}),\ v\in\R^3,
\end{equation}
where the real diagonal $3\times 3$ matrices $D_A$, $D_K$ are given by $D_A =R^{T}AR$, $D_K =R^{T}KR$, where $R\in O(Id)=O(3,\R)$ is a $3\times 3$ orthogonal matrix, and $\hat{\bold{k}}:=\det(R)R^{T}\bold{k}$, $\hat{\bold{a}}:=\det(R)R^{T}\bold{a}$.
\end{theorem}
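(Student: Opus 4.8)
The plan is to combine two classical facts. First, two commuting real symmetric matrices are simultaneously orthogonally diagonalizable; applied to $A$ and $K$ this produces the diagonal matrices $D_A,D_K$. Second, under an orthogonal linear map $P$ the vector cross product obeys $(Px)\times(Py)=\det(P)\,P(x\times y)$. Transporting the right-hand side of \eqref{general} through the change of coordinates supplied by the first fact, and simplifying with the second, will directly produce \eqref{sok}. The only point that needs care is the bookkeeping of the scalar $\det(R)$, which is precisely what accounts for the determinant corrections in $\hat{\bold{k}}$ and $\hat{\bold{a}}$.

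Carrying this out: since $A,K\in Sym(3)$ and $AK=KA$, pick $R\in O(3,\R)$ with $D_A:=R^{T}AR$ and $D_K:=R^{T}KR$ both diagonal, and put $P:=\det(R)\,R$. As $\det(R)=\pm1$, the matrix $P$ is again orthogonal, $\det(P)=1$, and $P^{T}AP=R^{T}AR=D_A$, $P^{T}KP=R^{T}KR=D_K$; equivalently $AP=PD_A$ and $KP=PD_K$. Make the orthogonal change of coordinates $u=Pv$, i.e. $v=P^{T}u=\det(R)\,R^{T}u$, so that $\dot u=P\dot v$. Writing $\bold{k}=P\,(P^{T}\bold{k})$ and $\bold{a}=P\,(P^{T}\bold{a})$ and using $KP=PD_K$, $AP=PD_A$,
\[
(Ku+\bold{k})\times(Au+\bold{a})=\bigl(P(D_Kv+P^{T}\bold{k})\bigr)\times\bigl(P(D_Av+P^{T}\bold{a})\bigr)=\det(P)\,P\bigl[(D_Kv+P^{T}\bold{k})\times(D_Av+P^{T}\bold{a})\bigr].
\]
Since $\det(P)=1$, this equals $P\bigl[(D_Kv+P^{T}\bold{k})\times(D_Av+P^{T}\bold{a})\bigr]$; cancelling the invertible matrix $P$ in $P\dot v=P\bigl[\dots\bigr]$ gives $\dot v=(D_Kv+P^{T}\bold{k})\times(D_Av+P^{T}\bold{a})$. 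Finally $P^{T}=\det(R)\,R^{T}$ yields $P^{T}\bold{k}=\det(R)\,R^{T}\bold{k}=\hat{\bold{k}}$ and $P^{T}\bold{a}=\det(R)\,R^{T}\bold{a}=\hat{\bold{a}}$, which is exactly \eqref{sok}, with $D_A=P^{T}AP=R^{T}AR$ and $D_K=R^{T}KR$ as stated.

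The argument is computational, and the only genuine (if small) obstacle is keeping track of $\det(R)$: conjugating \eqref{general} by $R$ directly would leave the leading quadratic term in the form $\det(R)\,(D_Kv)\times(D_Av)$, which does not match \eqref{sok}, whereas conjugating by $P=\det(R)\,R$ restores the coefficient $+1$ on the quadratic part and confines the sign to the constant vectors $\hat{\bold{k}},\hat{\bold{a}}$. (Equivalently, one could have demanded $R\in SO(3)$ from the start, in which case $P=R$; the statement is phrased for arbitrary orthogonal $R$ only to avoid that restriction.) It remains to record a proof of the identity $(Px)\times(Py)=\det(P)\,P(x\times y)$ for $P\in O(3,\R)$; this is standard — it follows from $\varepsilon_{ijk}P_{ia}P_{jb}P_{kc}=\det(P)\,\varepsilon_{abc}$, or from decomposing $P$ as a rotation, which preserves $\times$, composed with at most one reflection, which reverses it.
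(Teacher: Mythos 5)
Your proof is correct and follows essentially the same route as the paper: the paper also makes the change of variables $u=\det(R)Rv$ (your $P=\det(R)R$) and simplifies the right-hand side with the identity $(Rx)\times(Ry)=\det(R)R(x\times y)$ for orthogonal $R$, arriving at \eqref{sok} by the same bookkeeping of $\det(R)$. Your packaging of the determinant into the single matrix $P$ with $\det(P)=1$ is a slightly tidier organization of the identical computation.
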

\begin{proof}
As a consequence of the commutativity of the real symmetric matrices $A$ and $K$ (i.e., $[A,K]=AK-KA=0$), there exists an orthogonal matrix $R\in O(Id)=O(3,\R)$ which diagonalize simultaneously both the matrices $A$ and $K$. More exactly, there exists two $3\times 3$ real diagonal matrices $D_A$ and respectively $D_K$ such that $D_A =R^{T}AR$, and respectively $D_K =R^{T}KR$. 

Let us denote $R^{-T}:=(R^{-1})^{T}$. Recall that for any matrix $R\in O(Id)=O(3,\R)$ we have that $(\operatorname{det}(R))^{2}=1$.

Next, we will show that:
\begin{equation*}
u(t)=\operatorname{det}(R)Rv(t),
\end{equation*}
where $t\mapsto u(t)$ is a solution of the dynamical system
\eqref{general}:
$$\dot u=(Ku+\bold{k})\times(Au+\bold{a}),$$
and respectively $t\mapsto v(t)$, is a solution of the dynamical
system
$$\dot v=(D_K v+\hat{\bold{k}})\times(D_A v+\hat{\bold{a}}).$$
To prove this assertion, note first that:
\begin{align*}
(Ku+\bold{k})&\times (Au+\bold{a})=(K\operatorname{det} (R)Rv+\bold{k})\times (A\operatorname{det} (R)Rv+\bold{a})\\
&=[\operatorname{det}(R)KRv+(\operatorname{det} (R))^{2}\bold{k}]\times [\operatorname{det} (R)ARv+(\operatorname{det}(R))^{2}\bold{a}]\\
&=(\operatorname{det} (R))^{2}[(KRv+\operatorname{det}(R)\bold{k})\times (ARv+\operatorname{det}(R)\bold{a})]\\
&=(KRv+\operatorname{det}(R)\bold{k})\times (ARv+\operatorname{det}(R)\bold{a})\\
&=(RD_K v+\operatorname{det}(R)\bold{k})\times (RD_A v+\operatorname{det}(R)\bold{a})\\
&=(RD_K v+\operatorname{det}(R)RR^{T}\bold{k})\times (RD_A v+\operatorname{det}(R)RR^{T}\bold{a})\\
&=[R(D_K v+\operatorname{det}(R)R^{T}\bold{k})]\times [R(D_A v+\operatorname{det}(R)R^{T}\bold{a})]\\
&=\operatorname{det} (R) R^{-T}[(D_K v+\hat{\bold{k}})\times (D_A v+\hat{\bold{a}})]\\
&=\operatorname{det} (R) R[(D_K v+\hat{\bold{k}})\times (D_A v+\hat{\bold{a}})].
\end{align*}
Hence,
\begin{align*}
\dot u=(Ku+\bold{k})\times (Au+\bold{a})&\Leftrightarrow \operatorname{det}(R)R\dot v=\operatorname{det} (R) R[(D_K v+\hat{\bold{k}})\times (D_A v+\hat{\bold{a}})]\\
&\Leftrightarrow\dot v=(D_K v+\hat{\bold{k}})\times (D_A v+\hat{\bold{a}}).
\end{align*}
\end{proof}
\begin{remark}
Using coordinates, the system \eqref{sok} becomes:
$$\left\{\begin{array}{l}
\dot x_1=(K_2 A_3-K_3 A_2)x_2 x_3+(K_2 a_3-k_3 A_2)x_2 +(k_2 A_3-K_3 a_2)x_3+k_2 a_3-k_3 a_2,\\
\dot x_2=(K_3 A_1-K_1 A_3)x_1 x_3+(k_3 A_1-K_1 a_3)x_1 +(K_3 a_1-k_1 A_3)x_3+k_3 a_1-k_1 a_3,\\
\dot x_3=(K_1 A_2-K_2 A_1)x_1 x_2+(K_1 a_2-k_2 A_1)x_1 +(k_1 A_2-K_2 a_1)x_2+k_1 a_2-k_2 a_1,\\
\end{array}\right.
$$
where $D_A=\operatorname{diag}(A_1,A_2,A_3)$, $D_K=\operatorname{diag}(K_1,K_2,K_3)$, $\hat{\bold{a}}=(a_1,a_2,a_3)$, and $\hat{\bold{k}}=(k_1,k_2,k_3)$.
\end{remark}

\subsection*{Acknowledgment}
The author is supported by CNCSIS - UEFISCDI, project number PN II - IDEI code 1081/2008 No. 550/2009.

\end{document}